\DeclareMathOperator*{\argmax}{argmax}
\newcommand{\ignore}[1]{}
\newcommand{\pp}[3]{\phi_{#2,#3}(#1)}
\newtheorem{proposition}{Proposition}
\newtheorem{lemma}{Lemma}
\newtheorem{corollary}{Corollary}
\newcommand{\fullpaper}[1]{#1}
\newcommand{\shortpaper}[1]{}
\patchcmd{\abstract}{\small}{}{}{}
\patchcmd{\conclusion}{\small}{}{}{}
\title{Hybrid Autoregressive Transducer (HAT)}
\name{Ehsan Variani, David Rybach, Cyril Allauzen, Michael Riley
\thanks{equal contribution}}
\address{\texttt{$\{$variani, rybach, allauzen, riley$\}$@google.com}
}
\begin{document}
%
\maketitle
\begin{abstract}
This paper proposes and evaluates the {\em hybrid autoregressive transducer
(HAT) model}, a time-synchronous encoder-decoder model that preserves the modularity
of conventional automatic speech recognition systems.
The HAT model provides a way to measure the quality of the internal language model
that can be used to decide whether inference with an
external language model is beneficial or not.
\fullpaper{
This article also presents a
finite context version of the HAT model that addresses the exposure bias problem and
significantly simplifies the overall training and inference.
}
We evaluate our proposed model on a large-scale voice search task. Our
experiments show significant improvements in WER compared to the state-of-the-art
approaches
\shortpaper{
\footnote{Please note that this is a summary of our paper, for details
please visit the arxiv submission.}
}
.

\end{abstract}
\begin{keywords}
ASR, Encoder-decoder, Beam Search
\end{keywords}
\shortpaper{
\vspace{-0.3cm}
}
\section{Introduction}
\label{sec:intro}
\shortpaper{
\vspace{-0.3cm}
}
The automatic speech recognition (ASR) problem is probabilistically formulated as a
\textit{maximum a posteriori} decoding
\cite{jelinek1976continuous,bahl1983maximum}:
\begin{eqnarray}
\hat{W} = \argmax_{W} P( W | X) = \argmax_{W} P( X | W) \cdot  P(W)
\end{eqnarray}
where $X$ is a sequence of acoustic features and $W$ is a candidate
word sequence. In conventional modeling, the
posterior probability is factorized as a product of an acoustic likelihood
and a prior \textit{language model} (LM)
via Bayes' rule. The acoustic likelihood is further factorized as a product
of conditional distributions of acoustic features given a phonetic sequence,
the \textit{acoustic model} (AM), and conditional distributions of a phonetic sequence
given a word sequence, the \textit{pronunciation model}.
The phonetic model structure, pronunciations and LM are commonly represented and
combined as {\em weighted finite-state transducers (WFSTs)} 
\cite{mohri2002weighted,mohri2008}.

The probabilistic factorization of ASR allows a \textit{modular} design
that provides practical advantages for training and inference.
The modularity permits training acoustic and language models independently and
on different data sets. A decent acoustic model can be trained
with a relatively small amount of transcribed audio data whereas the LM can be
trained on text-only data, which is available in vast amounts for many languages.
Another advantage of the modular modeling approach is it allows dynamic modification
of the component models in well proven, principled methods such as
vocabulary augmentation \cite{allauzen2015}, LM adaptation and contextual biasing \cite{Hall2015}.

From a modeling perspective, the modular factorization of the posterior probability
has the drawback that the
parameters are trained separately with different objective functions. In
conventional ASR this has been addressed to some extent by introducing
discriminative approaches for acoustic modeling
\cite{bahl1986maximum, povey2005discriminative, kingsbury2009lattice} and
language modeling
\shortpaper{
\cite{chen2000discriminative}.
}
\fullpaper{
\cite{chen2000discriminative, kuo2002discriminative}.
}
These optimize the posterior probability by varying the parameters of
one model while the other is frozen.
\ignore{
These algorithms maximize the posterior probability either explicitly as in
Maximum Mutual Information (MMI) \cite{bahl1986maximum},
or by introducing a surrogate loss function as in 
Minimum Phone Error (MPE) \cite{povey2005discriminative} or state-level
Minimum Bayes Risk (sMBR) \cite{kingsbury2009lattice}. In these approaches one
of the factors (LM in discriminative acoustic modeling and AM in discriminative
language modeling) is kept frozen while the other factor parameters are
trained to maximize the posterior probability on the set of transcribed audio data.
}

Recent developments in the neural network literature, particularly
{\em sequence-to-sequence (Seq2Seq) modeling}
\shortpaper{
\cite{auli2013joint, kalchbrenner2013recurrent},
}
\fullpaper{
\cite{auli2013joint, kalchbrenner2013recurrent,cho2014learning,cho2014properties,sutskever2014sequence},
}
have allowed the full optimization of the posterior probability, 
learning a direct mapping between feature sequences and
orthographic-based (so-called {\em character-based}) labels, like graphemes or wordpieces.
One class of these models uses a {\em recurrent neural network (RNN)} architecture such as the
{\em long short-term memory (LSTM)} model \cite{hochreiter1997long} 
followed by a softmax
layer to produce label posteriors. All the parameters are optimized at the sequence
level using, e.g., 
the {\em connectionist temporal classification (CTC)} \cite{graves2006connectionist} criterion.
The {\em letter models}
\shortpaper{
\cite{eyben2009speech}
}
\fullpaper{
\cite{eyben2009speech,hannun2014deep, collobert2016wav2letter}
}
and {\em word model}
\cite{soltau2016neural} are examples of this neural model class.
Except for the modeling unit, these models are very similar to conventional
acoustic models and perform well when combined
with an external LM during decoding (beam search).
\shortpaper{
\cite{zenkel2017comparison}.
}
\fullpaper{
\cite{zenkel2017comparison,battenberg2017exploring}.
}

Another category of direct models are the {\em encoder-decoder} networks. Instead of
passing the encoder activations to a softmax layer for the label posteriors,
these models use a decoder network to combine the encoder information with 
an embedding of the previously-decoded label history that provides
the next label posterior.
The encoder and decoder parameters are optimized together using
a time-synchronous objective function as with the {\em RNN transducer} (RNNT)
\cite{graves2012sequence} or a label-synchronous objective as with {\em listen, attend,
and spell (LAS)} \cite{chan2016listen}. These models are usually trained with
character-based units and decoded with a basic beam search. There has been
extensive efforts to develop decoding algorithms that can use external
LMs, so-called {\em fusion methods}
\shortpaper{
\cite{gulcehre2015using,chorowski2016towards}.
}
\fullpaper{
\cite{gulcehre2015using,chorowski2016towards,sriram2017cold,hori2018end,shan2019component}.
}
However, these methods have shown relatively small gains on large-scale ASR tasks
\cite{kannan2018analysis}. In the current state of encoder-decoder ASR models,
the general belief is that encoder-decoder models with character-based units outperform
the corresponding phonetic-based models. This has led some to conclude
a lexicon or external LM is unnecessary
\cite{sainath2018no,zhou2018comparison,irie2019model}.

This paper proposes the {\em hybrid autoregressive transducer (HAT)}, a time-synchronous
encoder-decoder model which couples the powerful probabilistic capability of
Seq2Seq models with an inference algorithm that preserves
modularity and external lexicon and LM integration. The HAT model allows an
internal LM quality measure, useful to decide if an external language model
is beneficial. The finite-history version of the HAT model is also presented
and used to show how much label-context history is needed to train a
state-of-the-art ASR model on a large-scale training corpus.

\shortpaper{
\vspace{-0.3cm}
}
\section{Time-Synchronous Estimation of Posterior: Previous Work}
\label{sec:time_sync_review}
\shortpaper{
\vspace{-0.2cm}
}

\fullpaper{
\begin{figure}[t]
  \centering
  \hspace{-0.7cm}
  \includegraphics[width=1.0\linewidth,keepaspectratio]{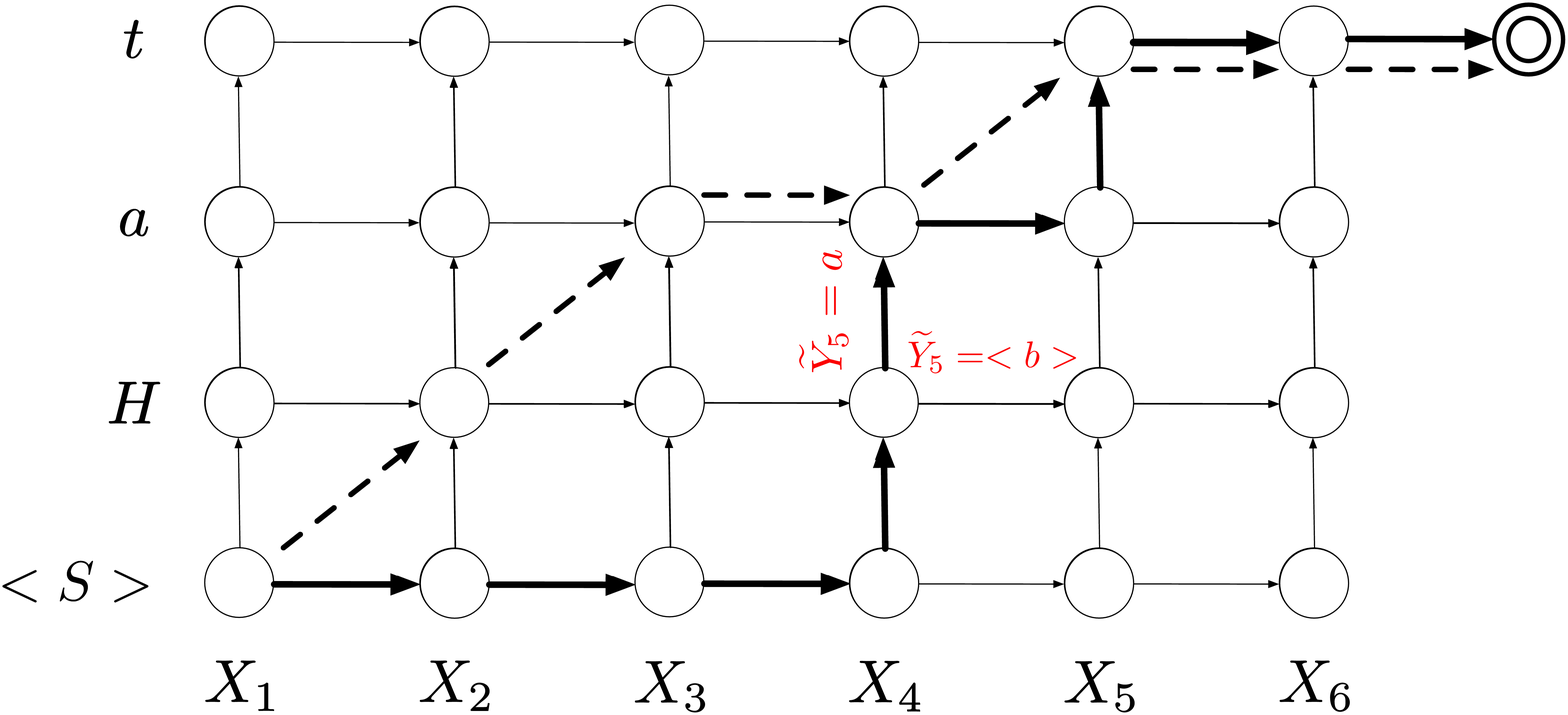}
  \caption{Lattice encoding all the eligible alignment paths. Each alignment
  path starts from bottom-left corner and ends in top-right corner. The path
  can traverse horizontal, vertical or diagonal edges.
  }
  \label{fig:lattice}
\end{figure}
}
\shortpaper{
\begin{figure}[t]
  \centering
  \hspace{-0.5cm}
  \includegraphics[width=.7\linewidth,keepaspectratio]{figures/lattice2.eps}
  \caption{Lattice encoding all the eligible alignment paths.}
  \label{fig:lattice}
  \vspace{-0.4cm}
\end{figure}
}
For an acoustic feature sequence $X = X_{1:T}$ corresponding to a word sequence
$W$, assume $Y=Y_{1:U}$ is a tokenization of $W$ where $Y_i \in V$, is
either a phonetic unit or a character-based unit from a finite-size alphabet $V$.
The character tokenization of transcript
\texttt{<S> Hat} and the corresponding acoustic feature sequence $X_{1:6}$
are depicted 
in the vertical and the horizontal axis of Figure~\ref{fig:lattice},
respectively. Define an {\em alignment path} $\widetilde{Y}$ as a sequence
of \textit{edges} ${\widetilde{Y}}_{1:|\widetilde{Y}|}$ where $|\widetilde{Y}|$,
the \textit{path length}, is the
number of edges in the path. Each path passes through sequence of nodes
$\left(X_{t_i}, Y_{u_i} \right)$ where $0 \leq t_i \leq T$, $0 \leq u_i \leq U$
and $t_i \geq t_{i-1}$, $u_i \geq u_{i-1}$.
The dotted path and the
bold path in Figure~\ref{fig:lattice} are two left-to-right alignment
paths between $X_{1:6}$ and $Y_{0:3}$. For any sequence pairs $X$ and $Y$,
there are an exponential number of alignment paths. Each time-synchronous model
typically defines 
a subset of these paths as the permitted path inventory, which
is usually referred to as the path \textit{lattice}. Denote
by $B: \widetilde{Y} \rightarrow {Y}$ the function that maps
permitted alignment paths to the corresponding label sequence.
For most models, 
this is many-to-one.
Besides the definition of the lattice, 
time-synchronous models usually differ on how
they define the alignment path posterior probability $P ( \widetilde{Y} | X )$.
Once defined, the posterior probability for a time-synchronous model
is calculated by
marginalizing over these alignment path posteriors:
\shortpaper{
\vspace{-0.3cm}
\begin{eqnarray}
  P \left( Y | X \right) = \sum_{\widetilde{Y}: {B}( \widetilde{Y} ) = Y} P ( \widetilde{Y} | X)
  \label{eq:margin_posterior}
  \vspace{-0.65cm}
\end{eqnarray}
  \vspace{-0.1cm}
}
\fullpaper{
\begin{eqnarray}
  P \left( Y | X \right) = \sum_{\widetilde{Y}: {B}( \widetilde{Y} ) = Y} P ( \widetilde{Y} | X)
  \label{eq:margin_posterior}
\end{eqnarray}
}
The model parameters are optimized by maximizing $P \left( Y | X \right)$.
\vspace{0.04cm}

\shortpaper{
\noindent {\textbf{Cross-Entropy (CE)}} and
          {\textbf{Connectionist Temporal Classification (CTC):}} These models make
          conditional independence assumption to calculate alignment posteriors. The CE models
          lattice contains only one path, while CTC lattice is defined by the
          topology explained in \cite{graves2006connectionist}.
}
\fullpaper{
\noindent {\textbf{Cross-Entropy (CE):}}
In the cross-entropy models, the alignment lattice contains only one path
which is
derived by a pre-processing step known as forced alignment. These models encode
the feature sequence $X$ via a stack of RNN layers to output activations
${\bf f \left( X \right)} = {\bf f_{1:T}}$. The activations are then passed to a linear
layer (aka logits layer) to
produce the unnormalized class-level scores, $S ( {\widetilde{Y}}_t = {\tilde{y}}_t | X )$.
The conditional posterior $P ( {\widetilde{Y}}_t| X )$ is derived by normalizing
the class-level scores using the softmax function
\shortpaper{.
}
\fullpaper{
\[
P ( {\widetilde{Y}}_t = y | X ) = \frac{\exp ( S( {\widetilde{Y}}_t = {\tilde{y}}_t | X ))} {\sum_{{{\tilde{y}}'}_t } \exp(S( {\widetilde{Y}}_t = {{\tilde{y}}'}_t | X ))}
\]
}
The alignment path posterior is derived by imposing the conditional independence assumption between label
sequence $Y$ given feature sequence $X$:
\shortpaper{$P\left( Y | X \right) = P ( \widetilde{Y} | X ) = \prod_{t=1}^{T} P ( {\widetilde{Y}}_t = {\tilde{y}}_t | X)$.
}
\fullpaper{
\[
P\left( Y | X \right) = P ( \widetilde{Y} | X ) = \prod_{t=1}^{T} P ( {\widetilde{Y}}_t = {\tilde{y}}_t | X)
\]
}
For inference with an external LM, the search is defined as:
\shortpaper{\vspace{-0.2cm}
}
\begin{eqnarray}
  {\tilde{y}}^{\ast} = \argmax_{\tilde{y}} \,\, \lambda_1 \phi ( x | \tilde{y} ) +  \log P_{LM} (B(\tilde{y}))
\end{eqnarray}
where $\lambda_1$ is a scalar and
$\phi ( x | \tilde{y} ) = \log (\prod_{t=1}^T P ( x_t | {\tilde{y}}_t ))$ is a
pseudo-likelihood sequence-level score derived by applying Bayes' rule on the time
posteriors: $P ( x_t | {\tilde{y}}_t ) \propto P ({\tilde{y}}_t | x_t) / P({\tilde{y}}_t)$
where $P({\tilde{y}}_t)$ is a label prior \cite{morgan1990continuous}.
\ignore{
\[
  P \left( X | Y \right) = \frac {\prod_{t=1}^T P \left( {\tilde{Y}}_t | X \right)}
  {\prod_{t=1}^T P \left( {\tilde{Y}}_t = y_t \right)} = \prod_{t=1}^T \frac{P \left( {\tilde{Y}}_t | X \right)}{P \left( {\tilde{Y}}_t = y_t \right)}
\]
}

\shortpaper{
\vspace{0.1cm}
}
\fullpaper{
\vspace{0.2cm}
}
\noindent {\textbf{Connectionist Temporal Classification (CTC):}}
The CTC model \cite{graves2006connectionist} augments the label alphabet with a
blank symbol, \texttt{<b>}, and defines
the alignment lattice to be the grid of all paths with edit distance of $0$ to the
ground truth label sequence after removal of all blank symbols and consecutive
repetitions.
\ignore{
For example, if the sequence of symbols from ground truth alignment
is \texttt{|h a a t t|}, the resulting
CTC path inventory should encodes all the paths
with the following pattern: \texttt{|<b>$\ast$ h$+$ <b>$\ast$ a$+$ <b>$\ast$ t$+$ <b>$\ast$|} where $\ast$ means
$0$ or more repetitions, and $+$ means $1$ or more repetitions.
}
The dotted path in
Figure~\ref{fig:lattice} is a valid CTC path.
Similar to the CE models, the CTC models also use a stack of RNNs followed by a logits layer
and softmax layer to produce local posteriors. The sequence-level alignment posterior is then
calculated by making a similar conditional independence assumption between 
the alignment label
sequence given $X$,
$
P ( \widetilde{Y} | X )=\prod_{t=1}^{T} P ({\widetilde{Y}}_t = {\tilde{y}}_t | X )
$.
Finally $P( Y | X )$ is calculated by marginalizing over 
the alignment posteriors with Eq~\ref{eq:margin_posterior}.
}
\fullpaper{
\vspace{0.2cm}
}

\noindent {\textbf{Recurrent Neural Network Transducer (RNNT):}}
The RNNT lattice encodes
all the paths of length $T+U-1$ edges starting from the bottom-left corner of the lattice in
Figure~\ref{fig:lattice} and ending in the top-right corner.
Unlike CE/CTC
lattice, RNNT allows staying in the same time frame and emitting multiple labels, like the
two consecutive vertical edges in the bold path in Figure~\ref{fig:lattice}.
Each alignment
path $\widetilde{Y}$ can be denoted by a sequence of labels 
${\widetilde{Y}}_{k}$, $1 \leq k \leq T + U -1$ prepended by ${\widetilde{Y}}_{0} = \texttt{<S>}$.
Here ${\widetilde{Y}}_{k}$ is a random
variable representing the edge label corresponding to time frame
$t=h({\widetilde{Y}}_{1:k-1})\triangleq \sum_{i=1}^{k-1} {\mathbbm{1}}_{{\widetilde{Y}}_{i} = \texttt{<b>}}$
while the previous
$u=v({\widetilde{Y}}_{1:k-1})\triangleq \sum_{i=1}^{k-1} {\mathbbm{1}}_{{\widetilde{Y}}_{i} \neq \texttt{<b>}}$ 
labels have been already
emitted. The indicator function ${\mathbbm{1}}_{{\widetilde{Y}}_{i} = \texttt{<b>}}$ returns
$1$ if ${\widetilde{Y}}_{i} = \texttt{<b>}$ and $0$ otherwise.
For the bold path in Figure~\ref{fig:lattice},
$\widetilde{Y} = \texttt{<S>}, \texttt{<b>}, \texttt{<b>}, \texttt{<b>}, \texttt{H}, \texttt{a}, \texttt{<b>}, \texttt{t}, \texttt{<b>}$.
\ignore{
$\widetilde{Y} = {\tilde{Y}}_{1, 0}, {\tilde{Y}}_{2, 0}, {\tilde{Y}}_{3, 0},
{\tilde{Y}}_{4, 1}, {\tilde{Y}}_{4, 2}, {\tilde{Y}}_{5, 2}, 
{\tilde{Y}}_{5, 3}, {\tilde{Y}}_{5, 4}, {\tilde{Y}}_{6, 4}$.
At each position $\left(t,u \right)$, the path prefix is denoted by
$\pp{\tilde{Y}}{t}{u}$.
For the bold path of Figure~\ref{fig:lattice},
$\pp{\tilde{Y}}{4}{1} = 
{\tilde{Y}}_{1, 0}, {\tilde{Y}}_{2, 0}, {\tilde{Y}}_{3, 0}$.
}

The local posteriors in RNNT are calculated under the assumption that
their value is independent of the path prefix, i.e.,
if
$
{B}({\widetilde{Y}}_{1:k-1} ) = {B}({\widetilde{Y}'}_{1:k-1} )
$
, then
$
P ( {\widetilde{Y}}_{k} | X, {\widetilde{Y}}_{1:k-1})
=
P ( {\widetilde{Y}}'_{k} | X, {\widetilde{Y'}}_{1:k-1})
$.
The local posterior is calculated using the encoder-decoder architecture.
The encoder accepts input $X$ and outputs $T$ vectors ${\bf f\left(X\right)} = {\bf f_{1:T}}$. The decoder
takes the label sequence $Y$ and outputs $U + 1$ vectors ${\bf g \left( Y \right)} = {\bf g_{0:U}}$ ,
where the $0^{th}$ vector corresponds to the \texttt{<S>} symbol.
The unnormalized score of the next label ${\widetilde{Y}}_{k}$ corresponding to position
$\left(t,u \right)$ is defined as:
$S  ( {\widetilde{Y}}_{k} = {\tilde{y}}_{k} | X, {\widetilde{Y}}_{1:k-1} ) = {\bf J} \left( {\bf f_t + g_u} \right)$.
The \textit{joint network} ${\bf J}$ is usually a multi-layer non-recurrent network.
The local posterior
$P ( {\widetilde{Y}}_{k} = {\tilde{y}}_{k} | X, {\widetilde{Y}}_{1:k-1} )$
is calculated
by normalizing the above score over all labels ${\tilde{y}}_{k} \in V \cup \texttt{<b>}$ using
the softmax function.
\ignore{
\[
  P ( {\tilde{Y}}_{t, u} = {\tilde{y}}_{t, u} | X, \pp{\tilde{Y}}{t}{u} ) = 
  \frac{\exp \left(Score (t, u, y_{t,u} ) \right)} {\sum_{{{\tilde{y}}'}_{t,u}} \exp \left( \ Score (t, u, {{\tilde{y}}'}_{t,u} ) \right)} \nonumber
  \label{eq:label_posterior}
\]
}
The alignment path posterior is derived by chaining the above quantity over the path,
\shortpaper{
$P({\widetilde{Y}} | X) = \prod_{k=1}^{T+U-1} P ( {\widetilde{Y}}_{k} | X, {\widetilde{Y}}_{1:k-1} )$.
}
\fullpaper{
\[
P({\widetilde{Y}} | X) = \prod_{k=1}^{T+U-1} P ( {\widetilde{Y}}_{k} | X, {\widetilde{Y}}_{1:k-1} )
\]
}
\ignore{
\[
  P ( {\tilde{y}}_{1, 0} | X ) P ( {\tilde{y}}_{2, 0} | X, {\tilde{y}}_{1, 0} ) \cdots
  P ( {\tilde{y}}_{6, 3} |X,  {\tilde{y}}_{1, 0}, \cdots {\tilde{y}}_{5, 3} )
  \]
}
\ignore{
The total label posterior is calculated by
marginalized over all alignment path posteriors using Eq~\ref{eq:margin_posterior}.
}

For the CTC and RNNT models, the inference with an 
external LM is usually formulated as the following search problem:
\shortpaper{
\vspace{-0.25cm}
}
\begin{eqnarray}
  {\tilde{y}}^{\ast} = \argmax_{\tilde{y}} \,\, \lambda_1 \log P' ( \tilde{y} | x ) + \log P_{LM}(B(\tilde{y})) + \lambda_2 v({\tilde{y}})
  \nonumber
\end{eqnarray}
where $P'$ is derived by scaling the blank posterior in $P$ followed
by re-normalization \cite{sak2015fast},
$\lambda_1$ is a scalar weight,
and $\lambda_2$ is a coverage penalty scalar which weights non-blank labels $v({\tilde{y}})$
\cite{chorowski2016towards}.

\shortpaper{
\vspace{-0.4cm}
}

\section{Hybrid Autoregressive Transducer}
\label{sec:hat_model}
\shortpaper{
\vspace{-0.2cm}
}
The Hybrid Autoregressive Transducer (HAT) model is a time-synchronous encoder-decoder model
which
distinguishes itself from other time-synchronous models by 
1) formulating the local posterior probability differently,
2) providing a measure of its internal language model quality,
and 3) offering a
 mathematically-justified inference algorithm for external LM
integration.
\shortpaper{
\vspace{-0.4cm}
}
\subsection{Formulation}
\shortpaper{
\vspace{-0.2cm}
}
The RNNT lattice definition is used for the HAT 
model
\shortpaper{.\footnote{mainly for ease of experimental comparison}}
\fullpaper{
Alternative lattice functions can be explored, this choice is merely selected to
provide fair comparison of HAT
and RNNT models in the experimental section.
}
To calculate the local conditional posterior, the HAT model differentiates between
horizontal and vertical edges in the lattice.
For edge $k$ corresponding to position $\left(t, u\right)$,
the posterior probability to take a horizontal edge
and emitting \texttt{<b>} is modeled by a Bernoulli distribution $b_{t,u}$ which is a
function of the entire past
history of labels and features. The vertical move is modeled by a posterior distribution
$P_{t,u} ( y_{u + 1} | X, y_{0:u})$, defined over labels in $V$.
The alignment posterior
$P( {\widetilde{Y}}_{k} = {\tilde{y}}_{k} | X, 
{\widetilde{Y}}_{1:k-1})$
is formulated as:
\begin{eqnarray}
  \begin{cases*}
    b_{t,u} & ${\tilde{y}}_{k} = <b>$ \\
      \left(1 - b_{t,u} \right) P_{t,u}\left(  {{y}}_{u + 1}  | X, y_{0:u} \right)        & ${\tilde{y}}_{k} = {{y}}_{u + 1}$
  \end{cases*}
  \label{eq:hat_model}
\end{eqnarray}

\noindent{\textbf{Blank (Duration) distribution:}} The input
feature sequence $X$ is fed to a stack of RNN layers to output $T$ encoded vectors
${\bf {f^1}\left( X \right) = {f}^1_{1:T}}$. The label sequence is also fed to
a stack of RNN layers to output ${\bf {g}^1\left( Y \right) = {g}^1_{1:U}}$. The
conditional Bernoulli distribution $b_{t, u}$ is then calculated as:
\shortpaper{
\vspace{-0.3cm}
}
\begin{eqnarray}
  b_{t, u} = \sigma \left({\bf w} \bullet ({\bf {f}^1_t + {g}^1_u}) + {\bf b} \right)
\end{eqnarray}
where $\sigma(\cdot)$ is the sigmoid function, ${\bf w}$ is a weight vector,
$\bullet$ is the dot-product and ${\bf b}$ is a bias term.

\fullpaper{
\vspace{0.4cm}
}
\noindent{\textbf{Label distribution:}}
The encoder function ${\bf {f^2}\left( X \right) = {f}^2_{1:T}}$ encodes
input features $X$ and the function ${\bf {g}^2\left( Y \right) = {g}^2_{1:U}}$
encodes label embeddings. At each time position $\left( t, u \right)$,
the joint score is calculated over all $y \in V$:
\shortpaper{
  \vspace{-0.1cm}
}
\begin{eqnarray}
  S_{t,u} (y | X, y_{0:u}) = {\bf J} \left( {\bf {f}^2_t + {g}^2_u} \right)
  \label{eq:posterior_score}
\end{eqnarray}
\shortpaper{
  \vspace{-0.1cm}
}
where ${\bf J(\cdot)}$ can be any function that maps ${\bf {f}^2_t + {g}^2_u}$ to a $|V|$-dim score vector.
The label posterior distribution is derived by normalizing the score functions
across all labels in $V$:
\shortpaper{
\vspace{-0.2cm}
}
\begin{eqnarray}
  P_{t,u} \left( y_{u + 1}  | X, y_{0:u} \right) = \frac{\exp \left(S_{t,u} (y_{u+1} | X, y_{0:u}) \right)}{\sum_{y \in V} \exp \left(S_{t,u} (y | X, y_{0:u}) \right)}
 \label{eq:hat_label_posterior}
\end{eqnarray}

The alignment path  posterior of the HAT model is computed by chaining the local
edge posteriors of Eq~\ref{eq:hat_model}, $P (\widetilde{Y} | X) = \prod_{k=1}^{T+U-1} P({\widetilde{Y}}_k | X, {\widetilde{Y}}_{1:k-1})$.
\fullpaper{
The posterior of the bold path is:
\begin{eqnarray}
  b_{1,0} b_{2,0} b_{3,0} (1 - b_{4,0}) P_{4,0} ( H |X,  \texttt{<S>}) \cdots b_{5, 3} \nonumber
\end{eqnarray}
}
Like any other time-synchronous model, the total posterior is derived by marginalizing
the alignment path posteriors using Eq~\ref{eq:margin_posterior}.

\shortpaper{
\vspace{-0.3cm}
}
\subsection{Internal Language Model Score}
\label{subsection:ILM}
\shortpaper{
\vspace{-0.2cm}
}
The separation of blank and label posteriors allows the HAT model to produce
a local and sequence-level internal LM score.
The local score at each label position $u$ is defined as:
$S ( y | y_{0:u} ) \triangleq {\bf J} \left({\bf {g}^2_u} \right)$.
In other words, this is exactly the posterior
score of Eq~\ref{eq:posterior_score} but eliminating the effect of 
the encoder activations, ${\bf {f}^2_t}$.
The intuition here is that a language-model quality measure at label $u$ should be only 
a function of
the $u$ previous labels and not the time frame $t$ or the acoustic features.
\ignore{
Note that because of the RNN architectue used for $g$ the activation ${g}^2_u$, depends
on all the $u$ previous labels.
}
Furthermore, this score can be normalized to produce a prior distribution
for the next label $P \left( y | y_{0:u} \right) = \text{softmax} (S ( y | y_{0:u} ))$.
The sequence-level internal LM score is:
\fullpaper{
\begin{eqnarray}
  \log P_{ILM} (Y) = \sum_{u=0}^{U-1} \log P (y_{u+1} | y_{0:u})
  \label{eq:p_ilm}
\end{eqnarray}
}
\shortpaper{
  \vspace{-0.25cm}
\begin{eqnarray}
\vspace{-0.6cm}
  \log P_{ILM} (Y) = \sum_{u=0}^{U-1} \log P (y_{u+1} | y_{0:u})
  \label{eq:p_ilm}
  \vspace{-0.2cm}
\end{eqnarray}
  \vspace{-0.1cm}
}
Note that the most accurate way of calculating the internal language model is
by $P \left( y | y_{0:u} \right) = \sum_{x} P(y | X=x, y_{0:u}) P (X=x | y_{0:u})$.
However, the exact sum is not easy to derive, thus some approximation like
Laplace's methods for integrals \cite{laplace1986memoir} is needed. Meanwhile,
 the above definition can be justified
for the special cases when
${\bf J} \left( {\bf {f}^2_t + {g}^2_u} \right) \approx {\bf J} \left( {\bf {f}^2_t} \right) + {\bf J} \left( {\bf{g}^2_u} \right)$
\shortpaper{
  (proof in arxiv submission.).
}
\fullpaper{
(proof in Appendix~\ref{app:ilm_score}.).
}
\shortpaper{
\vspace{-0.45cm}
}
\subsection{Decoding}
The HAT model inference search for ${\tilde{y}}^{\ast}$ that maximizes:
\shortpaper{
\vspace{-0.2cm}
}
\begin{eqnarray}
         \lambda_1 \log P ( \tilde{y} | x ) - \lambda_2 \log P_{ILM} (B({\tilde{y}}) ) + \log P_{LM} ({\tilde{y}})
  \label{eq:infernece_space}
\end{eqnarray}
where $\lambda_1$ and $\lambda_2$ are scalar weights.
Subtracting the internal LM score, \textit{prior},
from the path posterior leads to a pseudo-likelihood (Bayes' rule), which is justified
for combining with an external LM score.
We use a conventional FST decoder with a decoder graph encoding the phone
context-dependency (if any), pronunciation dictionary and the (external)
n-gram LM. The partial path hypotheses are augmented with the corresponding
state of the model. That is, a hypothesis consists of the time frame $t$,
a state in the decoder graph FST, and a state of the model. Paths with
equivalent history, i.e. an equal label sequence without blanks, are merged
and the corresponding hypotheses are recombined.

\shortpaper{
\vspace{-0.4cm}
}
\section{Experiments}
\label{sec:exps}
\shortpaper{
\vspace{-0.15cm}
}
\ignore{
The experiments are designed to evaluate different aspects of the HAT model: overall
training and inference performance, impact of decoding with an external LM
and the effect of context size on the model performance.
}
The training set, $40$ M utterances, development set, $8$ k utterances,
and test set, $25$ hours, are all anonymized,
hand-transcribed, representative of Google traffic queries.
\ignore{
A data set of 40 M utterances
is used for training. A separate development set of
$8$k utterances was used for optimizing hyper-parameters.
The test set is a
separate collection of about $25$ hours from the same domain.
The train, development and test sets are all anonymized,
hand-transcribed, and are representative of Google traffic queries.
}
The training examples are $256$-dim.\ log Mel features extracted from a $64$ ms
window every $30$ ms
\cite{variani2017end}
. The training examples are
noisified with $25$ different noise styles as detailed in
\cite{kim2017efficient}. Each training example is forced-aligned to get the
frame level phoneme alignment.
All models (baselines and HAT) are trained to predict $42$ phonemes
and are decoded with a lexicon and an n-gram language
model that cover a $4$ M words vocabulary. The LM was trained on anonymized audio
transcriptions and web documents.
A maximum entropy LM \cite{biadsy2017effectively} is applied using an
additional lattice re-scoring pass. The decoding hyper-parameters
are swept on separate development set.

Three time-synchronous baselines are presented.
All models use $5$ layers of LSTMs with
$2048$ cells per layer for the encoder. 
For the models with a decoder network (RNNT and HAT), each label is embedded by a
$128$-dim. vector and fed to a decoder network which has $2$ layers of LSTMs with
$256$ cells per layer.
The encoder and decoder activations are projected to a $768$-dim. vector and their
sum is passed to the joint network.
The joint network is a tanh layer followed by a linear
layer of size $42$ and a softmax as in \cite{he2019streaming}.
The encoder and decoder networks are shared for the blank and
the label posterior in the HAT model such that it has exactly the same number of parameters
as the RNNT baseline. The CE and CTC models have $112$M float parameters while RNNT and
HAT models have $115$M parameters.
\ignore{
For all models, the decoding hyper-parameters were swept and best WER result is reported here.
The performance of the all the baseline models along
with the total parameter size is presented in Table~\ref{tab:hat_baseline}.
}
\shortpaper{
\begin{table}[h]
\small
\begin{tabular}{|c|c|c|c|}
\hline
WER            &   \multicolumn{2}{|c|}{\bf 1st-pass}        & {\bf 2nd-pass} \\
\cline{2-3}
(del/ins/sub)& 1st best & 10-best Oracle &  \\
\hline
CE          &  8.9 (1.8/1.6/5.5) & 4.0 (0.8/0.5/2.6)  & 8.3 \\ 
CTC         &  8.9 (1.8/1.6/5.5) & 4.1 (0.8/0.6/2.6)  & 8.4 \\ 
RNNT        &  8.0 (1.1/1.5/5.4) & 2.1 (0.2/0.4/1.6)  &  7.5 \\ 
\hline
HAT        & 6.6 (0.8/1.5/4.3)      & 1.5 (0.1/0.4/1.0) & 6.0 \\ 
\hline
\end{tabular}
\caption{\label{tab:hat_baseline} WER for the baseline and HAT models.}
\end{table}
}
\fullpaper{
\begin{table}[h]
\small
\begin{tabular}{|c|c|c|c|}
\hline
WER            &   \multicolumn{2}{|c|}{\bf 1st-pass}        & {\bf 2nd-pass} \\
\cline{2-3}
(del/ins/sub)& 1st best & 10-best Oracle &  \\
\hline
CE          &  8.9 (1.8/1.6/5.5) & 4.0 (0.8/0.5/2.6)  & 8.3 (1.7/1.6/5.0)\\
CTC         &  8.9 (1.8/1.6/5.5) & 4.1 (0.8/0.6/2.6)  & 8.4 (1.8/1.6/5.0)\\
RNNT        &  8.0 (1.1/1.5/5.4) & 2.1 (0.2/0.4/1.6)  &  7.5 (1.2/1.4/4.9)\\ 
\hline
HAT        & 6.6 (0.8/1.5/4.3)      & 1.5 (0.1/0.4/1.0) & 6.0 (0.8/1.3/3.9) \\
\hline
\end{tabular}
\caption{\label{tab:hat_baseline} WER for the baseline and HAT models.}
\end{table}
}
\shortpaper{
\begin{figure}
  \begin{minipage}[b]{0.23\textwidth}
    \includegraphics[width=\textwidth]{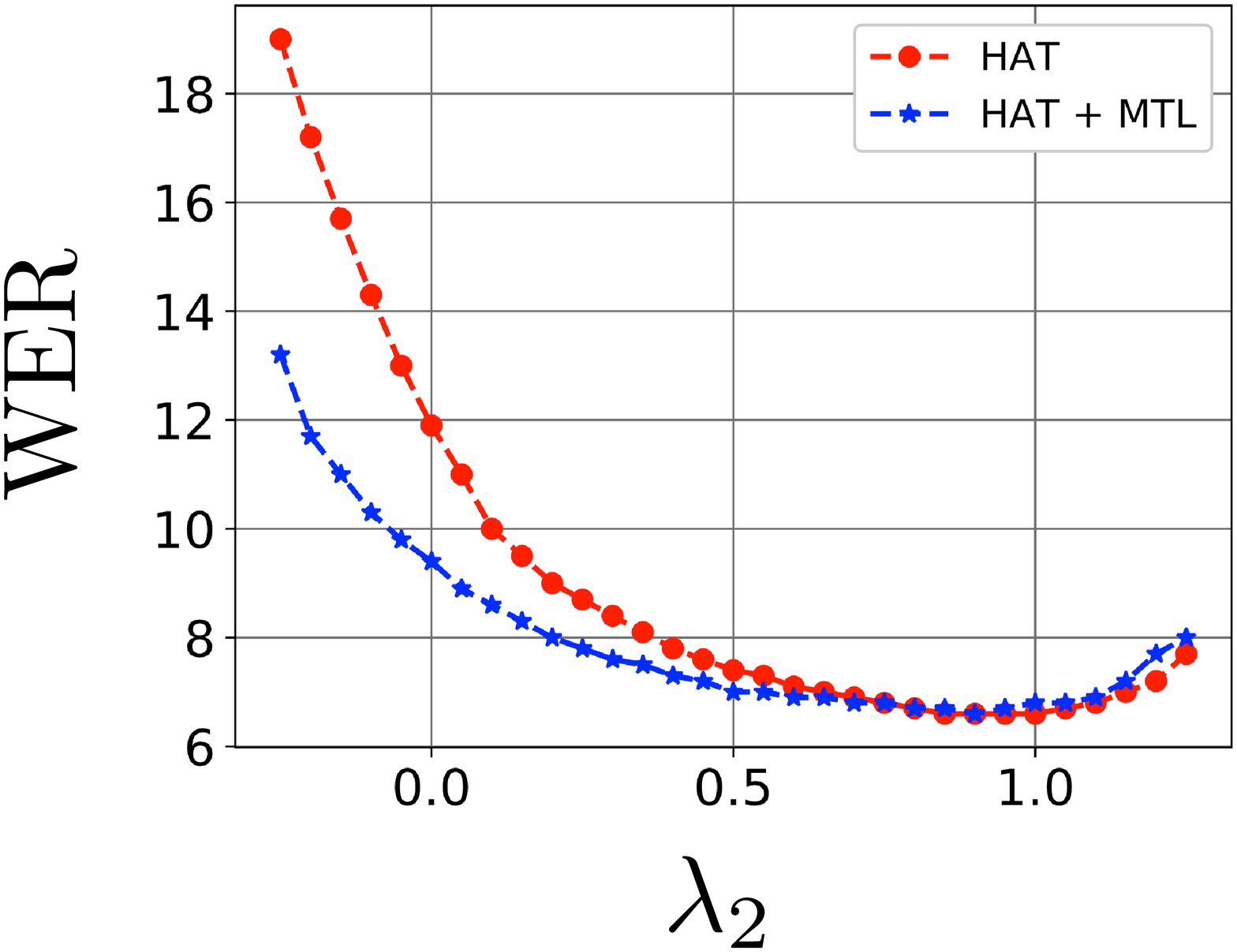}
    \caption{WER vs ILM score weight $\lambda_2$, for $\lambda_1 = 2.5$ for the HAT and HAT + MTL models.}
    \label{fig:lambda2_sweep}
  \end{minipage}
  \hspace{0.1cm}
  \begin{minipage}[b]{0.23\textwidth}
    \includegraphics[width=\textwidth]{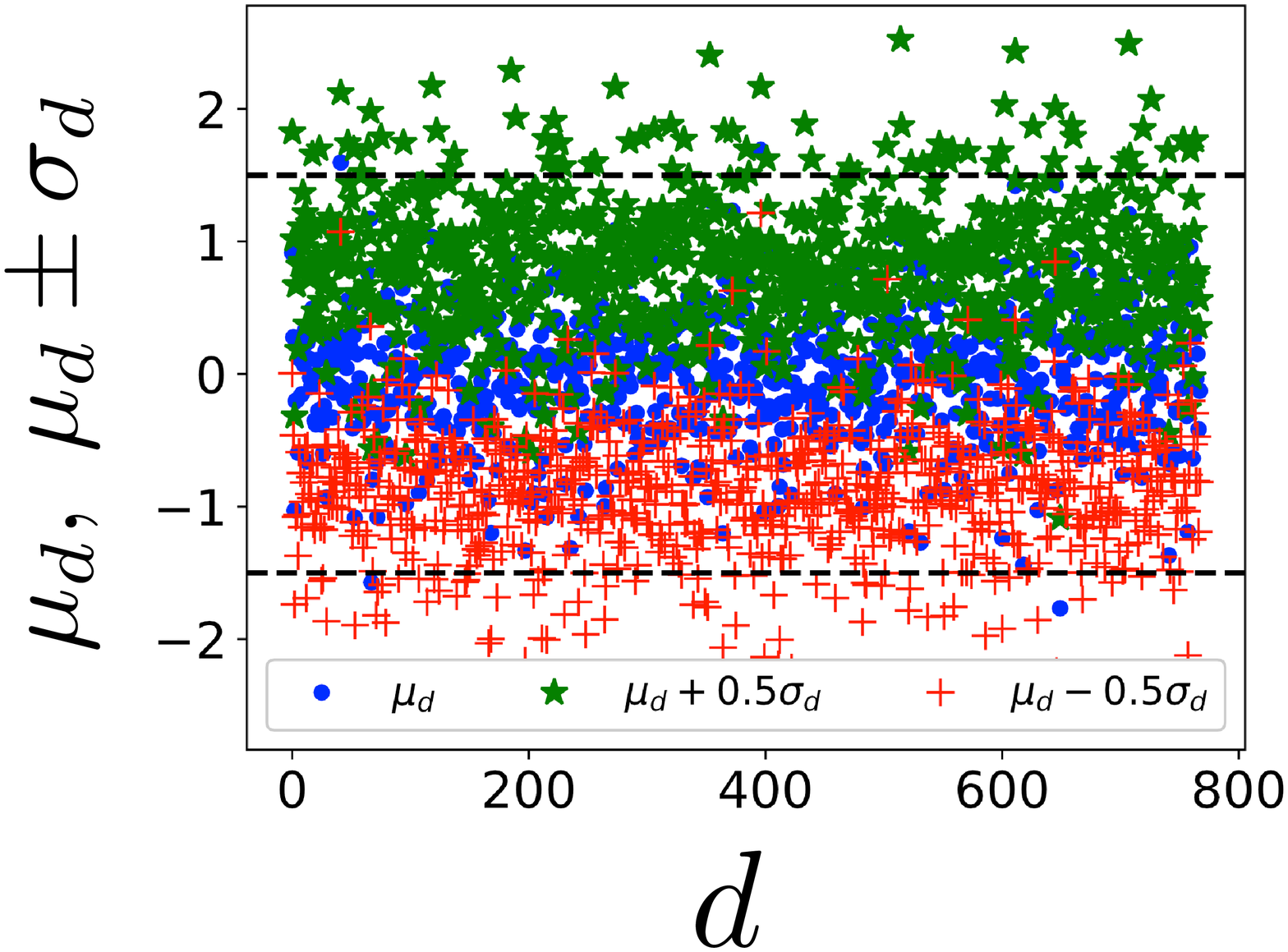}
    \caption{The mean ($\pm$ standard deviation) of ${\bf f_t + g_u}$ falls within linear range of tanh (dotted lines).}
    \label{fig:estimation}
  \end{minipage}
  \vspace{-0.4cm} 
\end{figure}
\begin{figure}
\vspace{-0.3cm}
  \hspace{-0.2cm}
  \includegraphics[width=0.5\textwidth]{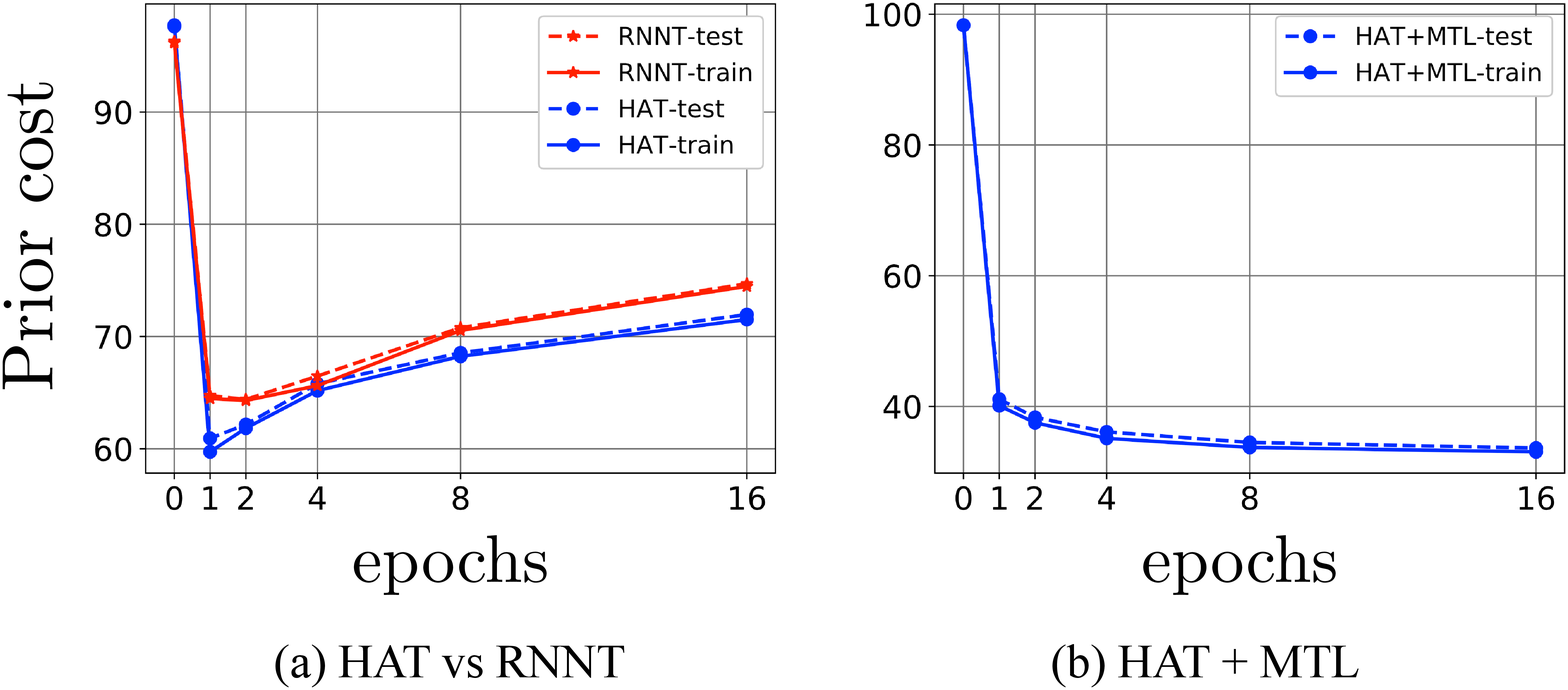}
  \caption{Prior cost vs epochs for HAT, RNNT, and HAT + MTL.}
  \label{fig:prior_cost}
\vspace{-0.6cm}
\end{figure}
\vspace{-0.3cm}
}
\fullpaper{
\begin{figure}
  \begin{minipage}[b]{0.23\textwidth}
    \includegraphics[width=\textwidth]{figures/wer_lambda.eps}
    \caption{WER [$\%$] vs internal language model score weight $\lambda_2$, for $\lambda_1 = 2.5$ for the HAT ($\color{red} \bullet$) and HAT + MTL ($\color{blue} \ast$) models.}
    \label{fig:lambda2_sweep}
  \end{minipage}
  \hspace{0.1cm}
  \begin{minipage}[b]{0.23\textwidth}
    \includegraphics[width=\textwidth]{figures/estimation.eps}
    \caption{The mean ($\pm$ standard deviation) of ${\bf f_t + g_u}$ falls within linear range of tanh (horizontal dotted lines).}
    \label{fig:estimation}
  \end{minipage}
\end{figure}
\begin{figure}
  \hspace{-0.2cm}
  \includegraphics[width=0.5\textwidth]{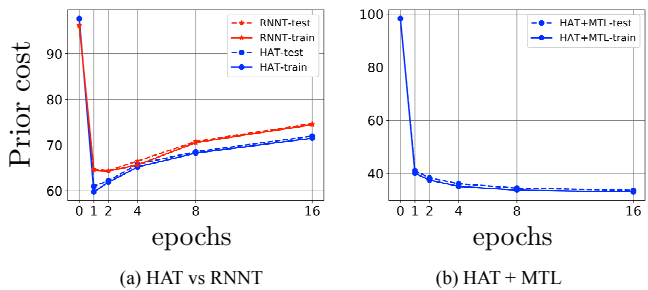}
  \caption{Prior cost vs epochs for HAT, RNNT, and HAT + MTL.}
  \label{fig:prior_cost}
\end{figure}
}

\ignore{
\begin{figure*}[h]
  \includegraphics[height=4.0cm,width=18.0cm]{figures/prior_posterior_heatmap.eps}
  \caption{The Prior distribution $P (y_{u+1} | y_{0:u})$ for $u =1:15$ (left) and the Posterior distribution $P(y_t | X)$ for $t=1:104$(right).}
  \label{fig:heatmaps}
\end{figure*}
}
\vspace{-0.2cm}
\noindent \textbf{HAT model performance:}
For the inference algorithm of Eq~\ref{eq:infernece_space}, we empirically observed that
$\lambda_1 \in (2.0, 3.0) $ and $\lambda_2 \approx 1.0$ lead to the best performance.
Figure~\ref{fig:lambda2_sweep} plots the WER for different values of $\lambda_2$,
for baseline HAT model and HAT model trained with multi-task learning (MTL).
The best WER at $\lambda_2=0.95$ on the convex curve
shows that significantly down-weighting the internal LM and relying more on the external LM yields the best decoding quality.
The HAT model outperforms the baseline RNNT model by $1.4\%$ absolute WER ($17.5 \%$
relative gain), Table~\ref{tab:hat_baseline}. Furthermore, using a 2nd-pass LM
gives an extra $10 \%$ gain which
is expected for the low oracle WER of $1.5 \%$.
\fullpaper{
Comparing the different types of errors,
it seems that the HAT model error pattern is different from all baselines. In all baselines,
the deletion and insertion error rates are in the same range, whereas the HAT model makes two times more insertion than deletion errors.\footnote{We are examining this
  behavior and hoping to have some explanation for it in the final version of
  the paper.}
}

\noindent \textbf{Internal language model:}
The internal language model score proposed in subsection \ref{subsection:ILM}
can be analytically justified when
${\bf J} \left( {\bf {f}^2_t + {g}^2_u} \right) \approx {\bf J} \left( {\bf {f}^2_t} \right) + {\bf J} \left( {\bf{g}^2_u} \right)$.
The joint function ${\bf J}$
is usually a tanh function followed by a linear layer \cite{he2019streaming}.
The approximation holds
iff ${\bf {f}^2_t + \bf{g}^2_u}$ falls in the linear range of the tanh function.
Figure~\ref{fig:estimation} presents the statistics of this $768$-dim. vector,
$\mu_d$ (${\color{blue} \bullet}$), $\mu_d + \sigma_d$ (${\color{green} \star}$),
and $\mu_d - \sigma_d$ (${\color{red} +}$) for $1 \leq d \leq 768$ which are
accumulated on the test set. The linear range of tanh function is specified
by the two dotted horizontal lines in Figure~\ref{fig:estimation}. The means and large portion
of one standard deviation around the means fall within the linear range of the tanh,
which suggests that the decomposition of
the joint function into a sum of two joint functions is plausible.

Figure~\ref{fig:prior_cost}(a) shows the
\textit{prior cost}, $\frac{-1}{|\mathcal{D}|} \sum_{y \in \mathcal{D}} \log P_{ILM} (y)$,
during the first $16$ epochs of training with $\mathcal{D}$ being either the train
or test set. The curves are plotted for both the HAT and RNNT models. Note that in case
of RNNT, since blank and label posteriors are mixed together, a softmax
is applied to normalize the non-blank scores to get the label posteriors, thus it just serves
as an approximation.
At the early steps of training, the prior cost is
going down, which suggests that the model is learning an internal language model. However,
after the first few epochs, the prior cost starts increasing, which suggests that
the decoder network deviates from being a language model.
Note that both train and test curves behave like this.
One explanation of this observation can be:
in maximizing the posterior, the model
prefers not to choose parameters that also maximize the prior.
We evaluated the performance of the HAT model when it further constrained with a
cross-entropy multi-task learning (MTL) criterion that minimizes the prior cost.
Applying this criterion results in decreasing the prior
cost, Figure~\ref{fig:prior_cost}(b). However, this did
not impact the WER. After sweeping $\lambda_2$, the WER for the HAT model with
MTL loss is as good as the baseline HAT model, blue curve in Figure~\ref{fig:lambda2_sweep}.
Of course this might happen because the internal language model is still weaker
than the external language model used for inference.

This observation might also be explained by Bayes' rule:
$
\log P (X | Y ) \propto \log P (Y | X) - \log P (Y)
$.
The observation that the posterior cost $- \log P (Y | X)$ goes down
while the prior cost $- \log P (Y)$ goes up
suggests
that the model is implicitly maximizing the log likelihood term in the left-side
of the above equation. This might be why subtracting the internal language
model log-probability from the log-posterior and replacing it with a strong external
language model during inference led to superior performance.

\ignore{
To further evaluate this hypothesis, we examined the prior
and posterior distributions for an example utterance. The example utterance
has $104$ frames with transcript \texttt{I bought rubber cups} and  the corresponding phonetic sequence
\texttt{sil aI b O t sil r\textbackslash V b @` sil k V p s sil}.
We started feeding
each label from left-to-right to the decoder network and calculated the prior
distribution $P(y_{u+1} | y_{0:u})$
, see the left heatmap in Figure~\ref{fig:heatmaps}.
The prior
distribution appears flat across labels, suggesting that the model
is uncertain at predicting the next label 
when relying solely on the label history,
i.e., no acoustic features. 
Next, we plotted the posterior probability $P(y_t | X)$ for
each of the $104$ frames of the input, see the 
right heatmap in Figure~\ref{fig:heatmaps}. In this case,
the posterior
at each time frame is calculated without any knowledge of the label history.
This figure shows that
the posterior appears very sharp and carries 
exact information about the ground-truth phoneme
sequence. The timing of predictions is also aligned with the forced-aligned
label sequence timing shown at the x-axis labels of the figure.
}
\ignore{
This analysis can further support why one the prior cost has a negative weight during
inference in Eq~\ref{eq:infernece_space}.
}
\ignore{
One way of forcing model to learn a better
internal language model is to introduce a multi-task learning loss (MTL) which adds
the cross-entropy loss between model prior distribution and groundtruth labels to the primary loss.
The blue curve in Figure~\ref{fig:posterior_prior_cost} is the prior cost for the HAT
model trained with the additional cross-entropy loss. With the MTL training, the prior
cost also starts going down smoothly. However there is still a question if this internal
LM is good enough for inference? For the first-pass external LM used in this paper, the
answer seems to be No, because as will be seen later, the quality of the external LM
is a lot better that internal LM.
}

\noindent \textbf{Limited vs Infinite Context:}
Assuming that the decoder network is not taking advantage of the
full label history, a natural question is how much of the label
history actually is needed to achieve peak performance. The HAT model
performance for different context sizes is shown in
Table~\ref{tab:hat_context}.
A context size of $0$ means feeding no label history to the decoder network,
which is similar to making a conditional independence assumption
for calculating the
posterior at each time-label position. The performance of this model is on a par with
the performance of the CE and CTC baseline models, c.f.\ Table~\ref{tab:hat_baseline}.
The HAT model with context size of $1$ shows $12 \%$ relative WER degradation compared
to the infinite history HAT model. However, the HAT models with
contexts $2$ and $4$ are
matching the performance of the baseline HAT model.

While the posterior
cost of the HAT model with context size of $2$ is about $10 \%$ worse than the baseline HAT
model loss, the WER of both models is the same. One reason for this behavior is that the
external LM has compensated for any shortcoming of the model. Another explanation
can be the problem of {\em exposure bias} \cite{ranzato2015sequence}, which refers to the mismatch between
training and inference for encoder-decoder models. The error propagation in an infinite
context model can be much severe than a finite context model with context $c$,
simply because the finite context model has a chance to reset its decision every
$c$ labels.
\shortpaper{
\vspace{-0.1cm}
}
\begin{table}[h]
\begin{center}
\small
\begin{tabular}{|c|c|c|c|c|c|c|c|c|}
\hline
Context         & 0 & 1 & 2 & 4 & $\infty$\\
\hline
1st-pass WER        & 8.5 &  7.4  &  6.6  & 6.6 & 6.6  \\
posterior cost            & 34.6 &  5.6  & 5.2   & 4.7 & 4.6  \\
\hline
\end{tabular}
\caption{\label{tab:hat_context} Effect of limited context history.}
\end{center}
\shortpaper{
\vspace{-0.3cm}
}
\end{table}
\vspace{-0.5cm}

\ignore{
Another interesting aspect of the finite context models is the possibilities they
provide for training and inference.
In the above experiments, given that}
Since the finite
context of $2$ is sufficient to perform as well as an infinite context, one can simply
replace all the expensive RNN kernels in the decoder network with a $|V|^2$ embedding
vector corresponding to all the possible permutations of a finite context of size
$2$. In other words, trading computation with memory, which can significantly reduce
total training and inference cost.

\shortpaper{
\vspace{-0.3cm}
}
\section{Conclusion}
\label{sec:conclusion}
\shortpaper{
\vspace{-0.2cm}
}
The HAT model is a step toward better acoustic modeling while
preserving the
modularity of a speech recognition system.
One of the key advantages of the HAT model is the introduction
of an internal language model quantity that can be measured
to better understand encoder-decoder models and
decide if equipping them with an external language
model is beneficial. According to our analysis, the decoder
network does not behave as a language model but
more like a finite context model. We presented a few
explanations for this observation. Further in-depth analysis
is needed to confirm the exact source of this behavior and how
to construct models that are really end-to-end, meaning the 
prior and posterior models behave as expected for a language model and
acoustic model.

\vfill\pagebreak

\label{sec:refs}
\bibliographystyle{IEEEbib}
\bibliography{strings,refs}

\vfill\pagebreak

\appendix
\section{Internal Language Model Score}
\label{app:ilm_score}

\begin{lemma}
  The softmax function is invertible up to an additive constant. In other words,
  for any two real valued vectors ${\bf v=v_{1:d}}, {\bf w = w_{1:d}}$, and real
  constant $c$,
  \[
   v_i = w_i + c  \Longleftrightarrow \frac{\exp (v_i)} {\sum_j \exp (v_j)} = \frac{\exp(w_i)}{\sum_j \exp(w_j)}
   \]
  for $1 \leq i \leq d$.
  \label{lemma1}
\end{lemma}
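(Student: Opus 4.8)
The plan is to establish the two implications of the biconditional separately, since neither is difficult once the common multiplicative factor is isolated.

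For the forward direction ($\Rightarrow$), I would simply substitute the hypothesis $v_i = w_i + c$ into the softmax of $\mathbf{v}$. Writing $\exp(v_i) = \exp(c)\exp(w_i)$ in the numerator and in every term of the denominator, the factor $\exp(c)$ pulls out of the sum and cancels against the numerator, leaving exactly the softmax of $\mathbf{w}$. This is a one-line computation:
\[
\frac{\exp(v_i)}{\sum_j \exp(v_j)} = \frac{\exp(c)\exp(w_i)}{\exp(c)\sum_j \exp(w_j)} = \frac{\exp(w_i)}{\sum_j \exp(w_j)}.
\]

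For the backward direction ($\Leftarrow$), I would start from the assumed equality of softmax outputs and cross-multiply to obtain $\exp(v_i)\sum_j \exp(w_j) = \exp(w_i)\sum_j \exp(v_j)$, or equivalently
\[
\exp(v_i - w_i) = \frac{\sum_j \exp(v_j)}{\sum_j \exp(w_j)}.
\]
The key observation is that the right-hand side does not depend on the index $i$: it is the ratio of the two partition functions, a fixed positive constant, call it $K$. Hence $v_i - w_i = \log K$ for every $i$, and setting $c = \log K$ yields the claim. I would note in passing that $K > 0$ since it is a quotient of sums of exponentials, so the logarithm is well defined, and that the existence of a single $c$ valid simultaneously across all coordinates is precisely what the statement asserts.

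I do not expect a genuine obstacle here; the only point requiring care is recognizing in the backward direction that the partition-function ratio is index-independent, which is exactly what forces $v_i - w_i$ to be one shared constant rather than a per-coordinate quantity. The strict positivity of every exponential guarantees that each division and each logarithm taken along the way is legitimate.
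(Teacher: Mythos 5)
Your proof is correct and follows essentially the same route as the paper's: the forward direction by factoring out and cancelling $\exp(c)$, and the backward direction by isolating $v_i - w_i$ as the logarithm of the ratio of the two partition functions, which is independent of $i$. The only cosmetic difference is that you cross-multiply before taking logarithms while the paper takes logarithms of the softmax equality directly; the key observation (index-independence of the partition-function ratio) is identical.
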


\begin{proof}
  If $\forall i: \,\,v_i = w_i + c\,$, then
  \begin{eqnarray}
    \frac{\exp (v_i)} {\sum_j \exp (v_j)} &=& \frac{\exp (w_i + c)} {\sum_j \exp (w_j + c)} \nonumber \\
    &=& \frac{\exp (w_i) \exp(c)}{\sum_j \exp (w_j) \exp(c)} \nonumber \\
    &=& \frac{\exp (w_i)}{\sum_j \exp (w_j)}
  \end{eqnarray}
  which proves the if condition. For the other side of condition, the proof goes as follow:
  \[
  \frac{\exp (v_i)} {\sum_j \exp (v_j)} = \frac{\exp(w_i)}{\sum_j \exp(w_j)}
  \]
  taking logarithm from both sides:
  \[
    v_i - \log \sum_j \exp(v_j) = w_i - \log \sum_j \exp(w_j)
  \]
  which implies:
  \[
  v_i - w_i = \log \sum_j \exp(v_j) - \log \sum_j \exp(w_j)
  \]
  this completes the proof since the right-hand-side (RHS) of above equation is constant for any $i$.
\end{proof}

\begin{corollary}
  For any two random variables $X$ and $Y$ and some real valued function
  $S (y, x)$, if the posterior distribution of $Y$ given $X$ be
  \[
  P (Y = y | X = x) = \frac{\exp (S (y, x))} {\sum_{y'} \exp S(y', x)}
  \]
  then $S (y, x) = \log P (Y=y, x) + c \,$, for some constant value $c \in \mathbbm{R}$.
  \label{corollary1}
\end{corollary}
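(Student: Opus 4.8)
The plan is to apply Lemma~\ref{lemma1} directly, after recognizing the given softmax representation as the softmax of the log-joint distribution. First I would fix an arbitrary value $x$ and view both quantities as real-valued vectors indexed by $y$: set $v_y = S(y, x)$ and $w_y = \log P(Y=y, x)$. By hypothesis, the softmax of $\mathbf{v}$ equals the posterior $P(Y=y \mid X=x)$, so it suffices to show that the softmax of $\mathbf{w}$ produces the same vector.

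The key computation is the softmax of $\mathbf{w}$. Using the fact that marginalizing the joint over $y$ recovers $P(X=x)$, I would write
\[
\frac{\exp(w_y)}{\sum_{y'}\exp(w_{y'})} = \frac{P(Y=y, x)}{\sum_{y'} P(Y=y', x)} = \frac{P(Y=y, x)}{P(X=x)} = P(Y=y \mid X=x).
\]
Thus $\mathbf{v}$ and $\mathbf{w}$ have identical softmax outputs, and invoking the implication of Lemma~\ref{lemma1} that equal softmax outputs force the underlying vectors to differ by an additive constant yields $v_y = w_y + c$ for all $y$, i.e.\ $S(y, x) = \log P(Y=y, x) + c$. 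This is essentially a one-line consequence once the softmax-of-log-joint identity is in place, so no heavy calculation is expected.

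The single point requiring care — and the only genuine obstacle — is the precise meaning of the constant $c$. The constant delivered by Lemma~\ref{lemma1} is constant with respect to the normalized index $y$, but it is obtained for a fixed $x$ and in general equals $\log\sum_{y'}\exp S(y', x) - \log P(X=x)$, which does depend on $x$. The honest reading of the statement is therefore that for each fixed $x$ there exists $c = c(x) \in \mathbb{R}$ with $S(y, x) = \log P(Y=y, x) + c(x)$; the phrase ``for some constant $c$'' should be understood as constant in $y$ rather than universal across $x$. I would make this $x$-dependence explicit, since it is exactly this term that later gets absorbed when isolating the acoustics-independent (prior) component of the joint network.
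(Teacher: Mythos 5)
Your proof is correct and takes essentially the same route as the paper: the paper's own proof of Corollary~\ref{corollary1} is a one-line deferral to Lemma~\ref{lemma1}, and your argument—fixing $x$, setting $v_y = S(y,x)$ and $w_y = \log P(Y=y,x)$, observing that both have the same softmax (the posterior), and invoking the reverse implication of the lemma—is exactly the omitted computation. Your added caveat that $c$ is constant only in $y$ and in general depends on $x$ (namely $c = \log\sum_{y'}\exp S(y',x) - \log P(X=x)$) is a legitimate sharpening the paper leaves implicit, and it is indeed the term that matters when the corollary is later applied and marginalized over $X$ in the Proposition.
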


\begin{proof}
  The proof is straight-forward following Lemma~\ref{lemma1}.
\end{proof}
\begin{proposition}
For the label distribution of Eq~\ref{eq:hat_label_posterior} with the score
function of Eq~\ref{eq:posterior_score}, if
${\bf J} \left( {\bf {f}^2_t + {g}^2_u} \right) \approx {\bf J} \left( {\bf {f}^2_t} \right) + {\bf J} \left( {\bf{g}^2_u} \right)$
for any $t$ and $u$,
then:
\begin{eqnarray}
  P (y | y_{0:u}) \propto \exp \left( {\bf J} \left({\bf {g}^2_u} \right) \right)
\end{eqnarray}
\end{proposition}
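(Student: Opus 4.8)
The plan is to exploit Corollary~\ref{corollary1}, which converts a softmax score back into a log-probability up to an additive constant, together with the hypothesized additive decomposition of the joint network. First I would substitute the approximation ${\bf J}(\mathbf{f}^2_t + \mathbf{g}^2_u) \approx {\bf J}(\mathbf{f}^2_t) + {\bf J}(\mathbf{g}^2_u)$ into the score of Eq~\ref{eq:posterior_score}, so that the numerator of the label posterior in Eq~\ref{eq:hat_label_posterior} factors as $\exp({\bf J}(\mathbf{f}^2_t)[y])\,\exp({\bf J}(\mathbf{g}^2_u)[y])$, cleanly separating an acoustic factor (depending on $X$ through $\mathbf{f}^2_t$) from a label-history factor (depending only on $y_{0:u}$ through $\mathbf{g}^2_u$). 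Here $[y]$ denotes the $y$-th component of the $|V|$-dimensional score vector.

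Next I would invoke Corollary~\ref{corollary1} with its two random variables played by the label $y$ and the acoustics $X$, everything conditioned on the history $y_{0:u}$. Since $P_{t,u}(y \mid X, y_{0:u})$ is a softmax over the decomposed score, the corollary gives ${\bf J}(\mathbf{f}^2_t)[y] + {\bf J}(\mathbf{g}^2_u)[y] = \log P(y, X \mid y_{0:u}) + c$ for a constant $c$. I would then factor the joint by Bayes' rule, $\log P(y, X \mid y_{0:u}) = \log P(X \mid y, y_{0:u}) + \log P(y \mid y_{0:u})$, and match terms by their dependence: the piece ${\bf J}(\mathbf{g}^2_u)[y]$ is the only part of the left-hand side independent of the acoustics, so it is to be identified with the acoustics-free prior $\log P(y \mid y_{0:u})$, while ${\bf J}(\mathbf{f}^2_t)[y]$ matches the acoustic likelihood $\log P(X \mid y, y_{0:u})$. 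Exponentiating the first identification yields $P(y \mid y_{0:u}) \propto \exp({\bf J}(\mathbf{g}^2_u))$, the softmax form underlying Eq~\ref{eq:p_ilm}.

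The hard part will be making the term-matching rigorous rather than arguing it by inspection of variable dependence. Writing the identity as ${\bf J}(\mathbf{g}^2_u)[y] - \log P(y \mid y_{0:u}) = \log P(X \mid y, y_{0:u}) - {\bf J}(\mathbf{f}^2_t)[y] + c$ and noting the left-hand side is independent of $X$ forces the right-hand side to be constant in $X$; imposing the normalization $\sum_X P(X \mid y, y_{0:u}) = 1$ then pins down the residual, which reduces to the acoustic log-partition $\log \sum_X \exp({\bf J}(\mathbf{f}^2_t)[y])$. I expect this to be exactly where the ``special case'' qualifier bites: the claimed proportionality is clean only once this acoustic normalizer is (approximately) independent of $y$ and can be absorbed into the proportionality constant, paralleling the label-prior normalization already implicit in the pseudo-likelihood step of Eq~\ref{eq:infernece_space}. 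I would therefore structure the final step to isolate this residual explicitly and argue that, under the same linear-range regime that validates the additive decomposition, it contributes only a $y$-independent constant.
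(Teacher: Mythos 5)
Your proposal is correct and follows essentially the same route as the paper's proof: invert the softmax via Corollary~\ref{corollary1}, factor by Bayes' rule, and eliminate the acoustic dependence by summing over $X$ and using the normalization of $P(X \mid y, y_{0:u})$. The only differences are presentational --- the paper marginalizes both expressions for $\exp(S_{t,u})$ over $X$ and compares the results, while you rearrange the identity and argue constancy in $X$ before normalizing; to your credit, you explicitly isolate the residual $\sum_{X} \exp\left({\bf J}\left({\bf f}^2_t\right)[y]\right)$ and note that its $y$-independence must be argued, a point the paper's proof simply asserts.
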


\begin{proof}
According to Eq~\ref{eq:hat_label_posterior}
\begin{eqnarray}
  P \left( y  | X=x, y_{0:u} \right) = \frac{\exp \left(S_{t,u} (y | X=x, y_{0:u}) \right)}{\sum_{y' \in V} \exp \left(S_{t,u} (y | X=x, y_{0:u}) \right)}
\end{eqnarray}
which implies:
\begin{eqnarray}
  S_{t,u} (y | X=x, y_{0:u}) &=& \log P\left(y, X=x, y_{0:u}\right) + c_1 \nonumber \\
  &=& \log P\left(X=x | y, y_{0:u}\right) + \log P\left(y, y_{0:u}\right) \nonumber \\
\end{eqnarray}
for some real valued constant $c_1$. The first equality holds from Corollary~\ref{corollary1},
and the second equality holds by Bayes' rule.

\noindent Applying exponential function and marginalizing
over $X$ results in:
\begin{eqnarray}
  \sum_{X} \exp ( S_{t,u} (y | X=x, y_{0:u}) ) &\propto& \sum_{x} P\left(X=x | y, y_{0:u}\right) P\left(y, y_{0:u}\right) \nonumber \\
  &=& P\left(y, y_{0:u}\right)
  \label{eq:first_eq}
\end{eqnarray}
note that $\exp(c_1)$ is dropped which make the left-hand-side (LHS) be proportional to
the RHS of first equation. The second equation holds since $P\left(X=x | y, y_{0:u}\right)$
is a distribution over $X$.

\noindent Finally note that
\begin{eqnarray}
  S_{t,u} (y | X, y_{0:u}) &=& {\bf J} \left( {\bf {f}^2_t + {g}^2_u} \right) \nonumber \\
  &=& {\bf J} \left( {\bf {f}^2_t} \right) + {\bf J} \left( {\bf{g}^2_u} \right)
\end{eqnarray}
where the first equality comes from the definition and the second one is the assumption
made in the statement of the proposition. Applying exponential function and marginalizing over
$X$:
\begin{eqnarray}
  \sum_{X} \exp ( S_{t,u} (y | X, y_{0:u}) ) &=& \sum_{X} \exp \left({\bf J} \left( {\bf {f}^2_t} \right) + {\bf J} \left( {\bf{g}^2_u} \right) \right) \nonumber \\
  &=& \exp \left({\bf J} \left( {\bf{g}^2_u} \right) \right) {\sum_{X} \exp \left({\bf J} \left( {\bf {f}^2_t}  \right) \right)} \nonumber \\
  &\propto& \exp \left({\bf J} \left( {\bf{g}^2_u} \right) \right)
  \label{eq:second_eq}
\end{eqnarray}
since the second term will be a constant and will not be function of $y$.
Comparing Eq~\ref{eq:first_eq} and Eq~\ref{eq:second_eq}:
\[
P\left(y, y_{0:u}\right) \propto \exp \left({\bf J} \left( {\bf{g}^2_u} \right) \right)
\]
\noindent Finally note that $P \left(y | y_{0:u} \right) = P\left(y, y_{0:u}\right) / P\left(y_{0:u}\right) $, thus:
\[
P \left(y | y_{0:u} \right) \propto \exp \left({\bf J} \left( {\bf{g}^2_u} \right) \right)
\]
which completes the proof.
\end{proof}

\end{document}